\newtheorem{theorem}{Theorem}
\newtheorem{lemma}[theorem]{Lemma}
\DeclareMathOperator{\Log}{Log}
\newcommand{\T}{T}
 \newcommand{\tkNode}[1]{
\protect \tikz[baseline=-0.4em]{ \protect \node [#1] {};}
}
\title{The Ising model on a cylinder: universal finite size corrections and diagonalized action}
\author{Rafael L.\ Greenblatt}
\date{\today}
\begin{document}
\maketitle

\begin{abstract}
Finite size corrections to the pressure (free energy) of the Ising model on a 2 dimensional cylinder are calculated and shown to be consistent with the predictions of conformal field theory.  The exact solution of the model is expressed in terms of the determinant of a block-diagonal matrix.
\end{abstract}

Among the many implications of the widely accepted hypothesis that the critical behavior of two-dimensional models in statistical mechanics is described by conformal field theory is the presence of a universal finite-size correction term in the pressure (logarithm of the partition function) at any critical point \cite{Affleck.universal,Cardy.universal}.  For an $M \times N$ system with $N >> M >> 1$, the pressure is expected to have the form
\begin{equation}
\log Z \approx p MN + k c \frac{N}{M},  \label{eq:cardy}
\end{equation}
where $c$ is the central charge of the appropriate CFT and $k$ is a universal factor depending only on the boundary conditions in the $M$ direction, taking the value $\pi/6$ for periodic boundary conditions (an infinite cylinder) and $\pi/24$ for open boundary conditions (an infinite strip).  In the Ising model, where it is expected that $c=\frac12$ based on the scaling forms of the correlation functions \cite{BPZ}, it was noted already in~\cite{Cardy.universal} that in the case of fully periodic boundary conditions Equation~\eqref{eq:cardy} can be obtained explicitly from the exact solution of the model, using a calculation due to Ferdinand and Fisher \cite{FF}.  Subsequently, Lu and Wu~\cite{LuWu} made a similar calculation on a M{\"o}bius strip and a Klein bottle, and Izmailian et.\ al.\ \cite{Izmailian.Brascamp} did likewise for the more exotic Brascamp-Kunz boundary conditions.   
Nonetheless, no results of this sort appear so far in the literature for the Ising model on an infinite cylinder, in spite of the fact that the exact solution in the case of cylindrical boundary was already studied in McCoy and Wu's classic monograph \cite{MW}.  This solution expresses the partition function as the Pfaffian (that is, the square root of the determinant) of a matrix which we will call the action, by analogy with the path integral representation of a free Fermionic field (the analogy extends to the form of many correlation functions \cite{Us}). Depending on the way in which the system size is taken to infinity, these boundary conditions give either an infinite strip or an infinite cylinder. 

Recently, Giuliani and Mastropietro \cite{GM13} used constructive renormalization group techniques to show that the expansion in Equation~\eqref{eq:cardy} is also valid for a large variety of non-solvable variants of the Ising model, however their proof holds only for fully periodic boundary conditions.  Among the reasons for this limitation is the fact that their technique, which maps additional terms in the Ising Hamiltonian into terms analogous to interactions in a Fermionic field, 
 requires an explicit diagonalization (or block-diagonalization in blocks of small, fixed size, which is much the same thing) of the action in the solution of the Ising model.  In the periodic case, this is a block-Toeplitz matrix and therefore can be block-diagonalized by a Fourier transform, but there is no satisfactory technique in general.

In Section~\ref{sec:solution}, I will review the exact solution of the Ising model in cylindrical boundary conditions.  This section concludes by expressing the action terms of a certain tridiagonal matrix.  
Section~\ref{sec:diag} shows how this matrix can be transformed into a suitable block-diagonal form, which is given explicitly up to the determination of the roots of a certain polynomial given in Equation~\eqref{eq:eigen_poly}.  Although the roots of this polynomial do not themselves have an explicit formula, their properties (which have already been noted in another context \cite{XXZ.open}) are spelled out in great detail in Lemma~\ref{lem:roots}.  Some explicit calculations can be made using this form: in Section~\ref{sec:diag_final}, I use it to rederive the exact formula for the partition function.  Finally, in Section~\ref{sec:uni}, I provide an expansion of the logarithm of the partition function which verifies Equation~\eqref{eq:cardy} for both the infinite strip and infinite cylinder geometries.  

As in the exact calculations for other boundary conditions  \cite{FF,LuWu,Izmailian.Brascamp}, Equation~\eqref{eq:cardy} appears as a limiting case of the expansion
\begin{equation}
\log Z = p MN + \kappa(M/N) + O(1/N) \label{eq:kappa_exp}
\end{equation}
where $M$ and $N$ are comparable in size, and $\kappa$ is given explicitly in terms of Jacobi theta functions in Equation~\eqref{eq:kappa_final}.  Unsurprisingly, the function $\kappa$ obtained here for cylindrical boundary conditions is different from those obtained for other boundary conditions.


\section{The exact solution of the Ising model on a cylinder}\label{sec:solution}

In this section I will review the exact solution of the Ising model in cylindrical boundary conditions due to McCoy and Wu \cite[p.~113-20]{MW}.  If we restrict to the case of isotropic interactions, the system is defined by the Hamiltonian
\begin{equation}
H(\sigma) = -J \sum_{(x,y) \sim (x',y')} \sigma_{xy} \sigma_{x'y'}
\end{equation}
where the sum runs over pairs of sites $(x,y), (x',y') \in \mathbb{Z}_M \times \mathbb{Z}_N$ which are nearest neighbors, including periodic boundary conditions in the $y$ direction.  We denote $t = \tanh \beta J$; with this notation the critical point of the system is $t = t_c := \sqrt{2} -1$.  Note that McCoy and Wu also allowed an additional term in the Hamiltonian coupling to the sites on one of the boundaries, but we take this term to be zero.

For $N$ even the partition function of this system can be expressed as
\begin{equation}
Z = \frac12 \left(2 \cosh \beta J\right)^{MN} \left( \cosh \beta J \right)^{N(M-1)} \sqrt{|S|}
\label{eq:Z_primitive} \end{equation}
where $S$, which we call the action, is an antisymmetric $4MN \times 4MN$ matrix $S$ made up of $4 \times 4$ blocks 
\begin{gather}
S_{x,y;x,y} = \begin{bmatrix}
0 & 1 & -1 & -1 \\
-1 & 0 & 1 & -1 \\
1 & -1 & 0 & 1 \\
1 & 1 & -1 & 0 
\end{bmatrix}, \; 1 \le x \le M, \; 1 \le y \le N \\
S_{x,y;x,y+1} = - S^\T_{x,y+1;x,y} = \begin{bmatrix}
0 & t & 0 & 0 \\
0 & 0 & 0 & 0 \\
0 & 0 & 0 & 0 \\
0 & 0 & 0 & 0 
\end{bmatrix}, \; 1 \le x \le M, \; 1 \le y < N \\
S_{x,y;x+1,y} = - S^\T_{x+1,y;x,y} = \begin{bmatrix}
0 & 0 & 0 & 0 \\
0 & 0 & 0 & 0 \\
0 & 0 & 0 & t \\
0 & 0 & 0 & 0 
\end{bmatrix}, \; 1 \le x < M, \; 1 \le y \le N \\
S_{x,N;x,1} = - S^\T_{x,1;x,N} = \begin{bmatrix}
0 & - t & 0 & 0 \\
0 & 0 & 0 & 0 \\
0 & 0 & 0 & 0 \\
0 & 0 & 0 & 0 
\end{bmatrix}, \; 1 \le x \le M \label{eq:matrix_boundary}
\end{gather}
and all other entries zero.  It is easy to confirm that the generalization to $N$ odd involves only changing the sign of the terms defined in Equation~\eqref{eq:matrix_boundary}, but I will consider only the even case in order to avoid complicating my notation.

Carrying out a Fourier transform in the $y$ direction block-diagonalizes $S$ in $N$ $4M \times 4M$ blocks, and a further transformation decomposes each of these into two blocks, giving 
\begin{equation}
|S| =  \prod_k |1 + t e^{ik}|^{2M} |A_M(k)|
\label{eq:s_det}
\end{equation}
where the sum over $k$ runs over
\begin{equation}
k = \frac{\pi(2n-1)}{N}
\end{equation}
for $n = 1,\dots,N$, and
$A_M(k)$ is the $2M \times 2M$ matrix \cite[p.~120]{MW}
\begin{equation}
A_M (k) :=\begin{bmatrix}
a(k) & b(k) & \\
-b(k) & - a(k) &  t \\
& -t & a(k) & b(k) \\
& &-b(k) & - a(k) &  t \\
& & & -t &a(k) & \ddots \\
& & & & \ddots & \ddots
\end{bmatrix}, \label{eq:Tform}
\end{equation}
where
\begin{gather}
a(k) := -\frac{2 t i \sin k}{|1+te^{ik}|^2} \label{eq:a_def},\\
b(k) := \frac{1-t^2 }{|1+te^{ik}|^2} . \label{eq:b_def}
\end{gather}

\section{Diagonalization of the matrix $A_M$} \label{sec:diag}
In~\cite{MW}, formulae are given for the determinant and inverse of $A_M$, but the matrix is not diagonalized.  As shown in this section, it can be block-diagonalized by a transformation which can be thought of as a Fourier sine transformation with modified frequencies.

In this section I will write $A_M(k) = A_M, a = a(k)$, $b = b(k)$ for brevity, since dependence on $k$ plays no role.

It is helpful to begin by diagonalizing the real symmetric matrix
\begin{equation}
A_M^2 = \begin{bmatrix}
a^2-b^2 & 0 & bt \\
 0 & a^2-b^2-t^2 & 0 & \ddots \\
bt & 0 & a^2-b^2-t^2 & \ddots & bt \\
&\ddots & \ddots & \ddots & 0 & bt\\
&& bt & 0 & a^2-b^2 -t^2 & 0\\
&&& bt & 0 & a^2-b^2
\end{bmatrix}
\end{equation}
with all diagonal entries other than the first and the last equal.  Note that all matrix entries between even rows and odd columns (and vice versa) vanish, so this matrix becomes block diagonal after a suitable rearrangement, with the blocks given by
\begin{equation}
B_M:=\begin{bmatrix}
a^2-b^2 & bt \\
bt & a^2 -b^2 -t^2 & bt \\
& bt & a^2-b^2-t^2 & \ddots \\
&& \ddots & \ddots
\end{bmatrix}
\end{equation}
and the matrix $\tilde{B}_M$ given by reversing the order of the rows and columns of $B_M$.  $B_M$ is very similar to a discrete Laplacian with peculiar boundary conditions, which suggests the ansatz
\begin{equation}
v_z=\begin{bmatrix}
\alpha_z z + \beta_z z^{-1} \\
\alpha_z z^2 + \beta_z z^{-2} \\
\vdots
\end{bmatrix},\label{eq:vdef}
\end{equation}
which is an eigenvector of $B_M$ iff the system of equations
\begin{gather}
(a^2-b^2)(\alpha_z z + \beta_z z^{-1}) + bt (\alpha_z z^2+ \beta_z z^{-2}) = \lambda_z (\alpha_z z + \beta_z z^{-1})\\[2ex]
\begin{split}
 bt (\alpha_z z^{k-1} +&\beta_z z^{-k+1})+ (a^2-b^2-t^2)(\alpha_z z^k + \beta_z z^{-k})  \\
  + bt &(\alpha_z z^{k+1} + \beta_z z^{-k-1})  \\
   &= \lambda_z (\alpha_z z^k + \beta_z z^{-k}), \qquad  1<k<M \label{eq:bulk_eigen}
\end{split}\\[2ex]
\begin{split}
 bt (\alpha_z z^{M-1} +\beta_z z^{-M+1})+ (a^2-b^2-t^2)(\alpha_z z^M + \beta_z z^{-M})  \\ = \lambda_z (\alpha_z z^M + \beta_z z^{-M})
\end{split}
\end{gather}
are all satisfied. Equation \eqref{eq:bulk_eigen} is solved by choosing $$\lambda_z = bt(z+z^{-1}) + (a^2-b^2-t^2),$$ which reduces the other two conditions to 
\begin{gather}
bt (\alpha_z +\beta_z) - t^2 (\alpha_z z + \beta_z z^{-1}) = 0  \label{eq:bd_eig}\\
bt  (\alpha_z z^{M+1} + \beta_z z^{-M-1}) = 0 
\end{gather}
  
The last condition implies
\begin{equation}
\beta_z = -\alpha_z z^{2M+2}\label{eq:ab_condition}
\end{equation}
which can be used to rewrite Equation~\eqref{eq:bd_eig} as
\begin{equation}
z^{2M+2} - \frac{t}{b}  z^{2M+1} + \frac{t}{b} z - 1 = 0. \label{eq:eigen_poly}
\end{equation}
The roots of this polynomial cannot generally be expressed explicitly, but they can be described in great detail.  I will postpone this discussion to Lemma~\ref{lem:roots} at the end of this section so as not to interrupt the flow of this calculation, and note a few properties which are immediately relevant.  Apart from $\pm 1$, where $v_{\pm 1} = 0$, the roots of Equation~\eqref{eq:eigen_poly} come in pairs $\{z, z^{-1}\}$ with either $|z|=1$ or $z$ real.  Each such pair corresponds to a different eigenvalue, and therefore to a linearly independent eigenvector $v_z$ of $B_M$.  When all the roots are nondegenerate, which is the case apart from a single value of $M$ (depending on $t/b$), this gives a complete set of $M$ eigenvectors.  The degenerate case can be avoided by skipping certain values of $M$ and $N$ in the thermodynamic limit, or by appealing to piecewise continuity of the quantities being calculated.

The resulting eigenvectors of $A_M^2$ are
\begin{equation}
u_z = c_z 
\begin{bmatrix}
z^{-M} - z^M \\ 0 \\
z^{1-M} - z^{M-1} \\ 0 \\ \vdots \\
z^{-1} - z \\ 0
\end{bmatrix}, \ 
w_z = c_z 
\begin{bmatrix}
0 \\ z^{-1} - z \\
0 \\ z^{-2} - z^2 \\ \vdots \\
0 \\ z^{-M} - z^M
\end{bmatrix}
\end{equation} 
where $z$ runs over a set $R_M$ of $M$ distinct roots of Equation~\eqref{eq:eigen_poly} such that each eigenvalue appears once, and $|c_z|=\alpha_z z^{M+1}$ is a normalization factor. 
To make $u_z$ and $w_z$ real, it suffices to take $c_z$ real for $z$ real and pure imaginary for $z$ on the unit circle.

We now return to $A_M$.
Noting that $$bz^{M+1} - t z^M = bz^{-M-1} - tz^{-M}$$ (from Eq.~\eqref{eq:eigen_poly}),
\begin{equation}
\begin{split}
A_M u_z = a u_z 
+ c_z \begin{bmatrix}
0 \\ -b (z^{-M} - z^M) + t (z^{1-M} - z^{M-1}) \\
0 \\ -b (z^{1-M} - z^{M-1}) + t (z^{2-M} - z^{M-2}) \\ \vdots \\
0 \\ -b (z^{-1} - z) + t(1-1)
\end{bmatrix}
\\ = a u_z 
+ ( b z^{n+1} -t z^{n} ) w_z,
\end{split}
\end{equation}
and
\begin{equation}
\begin{split}
A_M w_z = -a w_z + c_z \begin{bmatrix}
-t (1 - 1) + b (z^{-1} - z) \\ 0 \\
-t (z^{-1} - z) + b (z^{-2} - z^2) \\ 0 \\ \vdots \\
-t (z^{1-M} -z^{M-1}) + b (z^{-M} - z^M) \\ 0
\end{bmatrix}
\\ = -a w_z + c_z \begin{bmatrix}
-t (z^{2M} - 1) + b (z^{2M+1} - z) \\ 0 \\
-t (z^{2M-1} - z) + b (z^{2M} - z^2) \\ 0 \\ \vdots \\
-t (z^{M+1} -z^{M-1}) + b (z^{M+2} - z^M) \\ 0
\end{bmatrix}
\\ = -aw_z + (tz^{M} - b z^{M+1})u_z
\end{split}
\end{equation}
or in other words, the change of variables given by $u_z$ and $w_z$ puts $A_M$ in block-diagonal form, with $2\times 2$ blocks
\begin{equation}
\begin{bmatrix}
a & z^{-M-1} (b - t z) \\
z^{M+1} (tz^{-1}-b) & -a
\end{bmatrix}.
\end{equation}

\newcommand{\PN}{\ensuremath{P_{M\beta}}}
\newcommand{\PNT}{\ensuremath{\tilde{P}_{M\beta}}}
\newcommand{\pn}{\ensuremath{p_M}}
\newcommand{\pnt}{\ensuremath{\tilde p_M}}
\newcommand{\rn}{\ensuremath{r_M}}
\newcommand{\rnt}{\ensuremath{\tilde r_M}}

%

Having arrived at this point, it is time to return to the question of the roots of Equation~\eqref{eq:eigen_poly}.  This polynomial appears as a factor of the one-particle Bethe equation for the Heisenberg XXZ chain in open boundary conditions, and its properties are not hard to establish \cite{XXZ.open}.
As the following Lemma shows, all or all but two of the roots are approximately evenly spaced around the unit circle. 

\begin{lemma} \label{lem:roots}
The polynomial
\begin{equation}
\PN(z) = z^{2M+2} - \beta z^{2M+1} + \beta z - 1 = 0,
\end{equation}
where $\beta > 0$, has $2M-2$ simple roots of the form $e^{\pm i q_j}$, with $$ \frac{\pi j }{ M} < q_j < \frac{\pi (j+1) }{ M+1}$$ for $j = 1,\dots,M-1$.  The other roots are either
\begin{enumerate}
\item $\pm 1$ and two complex roots of the form $e^{\pm iq_0}$ with $0 < q_0 < \pi/(M+1)$, if $\beta <  \frac{M+1}{M}$,
\item $\pm 1$ if $\beta = \frac{M+1}{M}$, \label{case:degen}
\item or $\pm 1$ and two positive real numbers $x$ and $1/x$, if $\beta >  \frac{M+1}{M}$.
\end{enumerate}
All roots are simple, except in case~\ref{case:degen}, when $1$ is the only degenerate root.
\end{lemma}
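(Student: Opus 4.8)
The plan is to reduce the problem to locating the real roots of a single degree-$M$ polynomial and then read off the location of each root of $\PN$ from it. The starting point is that $\PN$ is anti-palindromic, $\PN(1/z) = -z^{-(2M+2)}\PN(z)$, so its roots are invariant under $z \mapsto 1/z$, and since the coefficients are real they are also invariant under conjugation; one checks directly that $\pm 1$ are always roots. Using the Chebyshev identity $\frac{z^{n}-z^{-n}}{z-z^{-1}} = U_{n-1}\!\left(\tfrac{z+z^{-1}}{2}\right)$, where $U_n$ denotes the Chebyshev polynomial of the second kind, I would establish the exact factorization
\begin{equation}
\PN(z) = z^{M}(z^{2}-1)\, g\!\left(\tfrac{z+z^{-1}}{2}\right), \qquad g(x) := U_{M}(x) - \beta U_{M-1}(x),
\end{equation}
with $g$ of degree $M$. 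Equivalently, writing $z = e^{iq}$ on the unit circle, the nontrivial roots are the solutions of $\sin((M+1)q) = \beta\sin(Mq)$.

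With this in hand the classification becomes elementary. Setting $x = \tfrac12(z+z^{-1})$, a root $x_\ast$ of $g$ produces a reciprocal pair of roots of $\PN$ lying on the unit circle when $x_\ast \in (-1,1)$ (write $x_\ast = \cos q$), positive real when $x_\ast > 1$, and negative real when $x_\ast < -1$; so it suffices to locate the $M$ roots of $g$ relative to $\pm 1$. I would do this by a sign analysis at the nodes $x = \cos(\pi j/M)$ (where $U_{M-1}$ vanishes, so $g = U_M$) and $x = \cos(\pi(j+1)/(M+1))$ (where $U_M$ vanishes, so $g = -\beta U_{M-1}$): a short computation gives $\operatorname{sign} g(\cos(\pi j/M)) = (-1)^j$ and $\operatorname{sign} g(\cos(\pi(j+1)/(M+1))) = (-1)^{j+1}$, which are opposite. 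By the intermediate value theorem this forces one root of $g$ in each interval $\left(\cos\frac{\pi(j+1)}{M+1}, \cos\frac{\pi j}{M}\right)$ for $j = 1,\dots,M-1$, which in the variable $q$ is exactly $\frac{\pi j}{M} < q_j < \frac{\pi(j+1)}{M+1}$, giving the $2M-2$ unit-circle roots $e^{\pm i q_j}$.

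The three cases are then distinguished by the sign of $g(1) = (M+1) - \beta M$. Comparing this with the endpoint value $g(\cos\tfrac{\pi}{M+1}) < 0$ at the first node: if $\beta < \frac{M+1}{M}$ there is an extra sign change producing a root $x_\ast \in (\cos\frac{\pi}{M+1},1)$, i.e.\ $q_0 \in (0,\pi/(M+1))$; if $\beta > \frac{M+1}{M}$ then $g(1) < 0$ while $g(x) \to +\infty$ as $x \to +\infty$ (its leading coefficient being $2^M$), forcing a root in $(1,\infty)$ and hence a positive real reciprocal pair; and $\beta = \frac{M+1}{M}$ is the borderline $g(1) = 0$. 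In each of cases 1 and 3 I will have exhibited $M$ roots of $g$, so by degree they are all of them, all simple, and none lies in $(-\infty,-1)$ — which is what rules out negative real roots in case~3. Completeness and simplicity for $\PN$ then follow from the degree count $2(M-1)+2+2 = 2M+2$, the intervals being pairwise disjoint and bounded away from $0$ and $\pi$, together with the fact that $dx/dz = \tfrac12(1-z^{-2}) \ne 0$ at $z = e^{\pm iq_j}$, so simple roots of $g$ transfer to simple roots of $\PN$.

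The one place needing genuine care is the degenerate case $\beta = \frac{M+1}{M}$, and this is where I expect the main work to lie. Here $g$ has a zero at $x=1$; because $x - 1 = \frac{(z-1)^2}{2z}$ vanishes to second order in $z-1$, the factor $g(\tfrac{z+z^{-1}}{2})$ contributes a double zero at $z=1$ on top of the simple zero already supplied by $z^2-1$, so that $z=1$ becomes a triple root, while $z=-1$ (where $\PN'(-1) = -(2M+2) - 2M\beta \ne 0$) and all the $e^{\pm i q_j}$ remain simple, accounting for the $\pm 1$ roots and the clause that $1$ is the only degenerate root. The delicate step is verifying that $x=1$ is a \emph{simple} zero of $g$ rather than a higher-order one, which would corrupt the multiplicity bookkeeping; this follows from the same degree count, since the $M-1$ forced roots in $(-1,1)$ together with the zero at $x=1$ already exhaust the $M$ zeros of $g$.
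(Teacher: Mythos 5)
Your proposal is correct, and it reaches the lemma by a genuinely different route than the paper. The paper writes $\PN = \pn - \beta\rn$ with $\pn(z)=z^{2M+2}-1$, $\rn(z)=z^{2M+1}-z$, pushes everything through the M\"obius transformation $z\mapsto\frac{z-i}{z+i}$ to obtain real polynomials of degree $2M+1$ whose real roots (images of roots of unity) interlace, and then locates the roots of the transformed $\PN$ by a sign analysis on the real line, needing an extra counting argument to cap the number of roots per interval at one. You instead exploit the anti-palindromic symmetry directly via the substitution $x=\tfrac12(z+z^{-1})$, factoring $\PN(z)=z^M(z^2-1)\,g(x)$ with $g=U_M-\beta U_{M-1}$ of degree only $M$; the sign alternation of $g$ at the interlacing nodes of $U_M$ and $U_{M-1}$ (which is the same interlacing the paper extracts from the roots of $\pn$ and $\rn$, seen after halving) then pins each root in its interval, and the degree count $M$ makes simplicity, completeness, and the exclusion of negative real roots automatic. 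Your treatment of the borderline case is also cleaner: the triple root at $z=1$ when $\beta=\frac{M+1}{M}$ falls out of $x-1=\frac{(z-1)^2}{2z}$ together with the simple zero of $g$ at $x=1$, whereas the paper verifies $\PN'(1)=\PN''(1)=0$ by direct computation. What the paper's route buys is that it never needs the Chebyshev identity and works directly with the decomposition into $\pn$ and $\beta\rn$, which is the form in which the polynomial arises from the eigenvector boundary conditions; what yours buys is a halved degree, a manifestly real one-variable problem, and no separate argument to exclude extra roots in each interval. Both ultimately rest on interlacing plus the intermediate value theorem, and all the computations you flag as needing verification (the sign values $(-1)^j$ and $(-1)^{j+1}$ at the two families of nodes, $g(1)=(M+1)-\beta M$, $\PN'(-1)\neq 0$) check out.
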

\begin{proof}

Let $\pn(z) =  z^{2M+2} - 1$ and $\rn(z) = z^{2M+1}-z$, so that
\begin{equation}
\PN (z) =  \pn(z) - \beta \rn(z).
\end{equation}
If we define 
\begin{equation}
\PNT(z) = i (z + i)^{2M+2} \PN \left( \frac{z-i}{z+i} \right) \label{eq:tilde_poly}
\end{equation}
and similarly $\pnt$, $\rnt$, then these are all polynomials, and $\PNT(z) = \pnt(z) + \rnt(z)$.  More precisely,
\begin{gather}
\pnt (z) = i (z-i)^{2M+2} - i (z+i)^{2M+2} = (4M+4) z^{2M+1} + O(z^{2M})\\
\rnt (z) = i (z-i)^{2M+1}(z+i) - i (z-i)(z+i)^{2M+1} = 4M z^{2M+1} + O(z^{2M})
\end{gather}
which are both odd polynomials with real coefficients, as is $\PN$.  $\pnt$ and $\rnt$ are both degree $2M+1$, and so $\PNT$ is at most of the same degree.  It is clear that the roots of $\PNT$ (resp.\ $\tilde p_n$, $\tilde r_n$) are the preimages under the M{\"o}bius transformation $z \mapsto \frac{z-i}{z+i}$ of the roots of $\PNT$ (resp $\rnt$, $\pnt$).

The roots of $\pn$ are the $e^{j \pi / (M+1)}$ ($j = -M,\dots,M+1$), and the roots of $\rn$ are $0$ and $e^{j \pi / M}$ ($j = -M+1,\dots,M$).  Noting that the M{\"o}bius transformation continuously maps the real line onto $\text{(the unit circle)} \setminus {1}$, we see that $\pnt$ has $2M+1$ real roots, which we denote by $-\pi_M < \dots < -\pi_1 < 0 < \pi_1 < \dots \pi_M$, while $\rnt$ has $2M-1$ real roots $-\rho_{M-1} < \dots < -\rho_1 < 0 < \rho_1 < \dots < \rho_{M-1}$.  Furthermore, $\pi_j < \rho_j < \pi_j+1$ for all $j <M$.  $\pm i$ are also roots of $\rnt$, so all of these roots must be nondegenerate.

Noting that $\pnt(x)$ and $\rnt(x)$ are both positive for large positive $x$, we see that $\PNT(x)$ is negative on $[\rho_{M-1}, \pi_M]$ (where $\pnt(x) \le 0$ and $\rnt(x) > 0$), positive on $[\rho_{M-2}, \pi_M-1]$, and so on.  For these signs to be obtained, $\PNT$ must have an odd number of roots in each of the intervals $(\rho_{j-1},\pi_{j})$, $j = 1,\dots,M-1$.  Bearing in mind that $\PNT$ is an antisymmetric polynomial of degree no more that $2M+1$, a counting argument shows that it cannot have more than one root in any of those intervals.  Bearing in mind that the M{\"o}bius transformation is continuous, we see that each root lies on an arc between the appropriate roots of $\pn$ and $\rn$; these and their complex conjugates are the $M-1$ pairs of roots of $\PN$ of the form $e^{\pm i q_j}$.

It is evident that $\pm 1$ are always roots of $\PN$, which can leave only two roots unaccounted for.  Examining the three cases enumerated above:
\begin{enumerate}
\item If $\beta < \frac{M+1}{M}$, then, for sufficiently large positive $x$, $\pnt(x) > \rnt(x)$ and $\PNT(x) > 0$.  Then $\PNT$ must have a real root in the interval $\pi_M,\infty$, whose preimage is of the desired form.
\item If $\beta = \frac{M+1}{M}$, it is easily verified that $\PN'(1) = \PN''(1) = 0$, so $1$ is a degenerate root of $\PN$ with multiplicity 3.
\item If $\beta > \frac{M+1}{M}$, it is easily verified that $\pm 1$ are simple roots of $\PN$.  $\PNT(x) < 0$ for sufficiently large positive $x$, so the only real roots of $\PNT$ are those already enumerated, and it must then have a single pair of imaginary roots.  Since $\PNT$ is odd, these must be pure imaginary, and the corresponding roots of $\PN$ are a pair of real numbers $x$ and $1/x$.

Without loss of generality we may take $|x| > 1$. 
$x$ cannot be negative, since,  for $x< -1$, $x^{2M+2} - 1 > 0$ and $x^{2M} -1 >0$ so $$\PN(x) = (x^{2M+2} - 1) - \beta x (x^{2M} -1) > 0.$$  
Similarly, for $x \ge \beta$, $x-\beta \ge 0$ and $\beta x -1 > 0$, so $$\PN(x) = (x-\beta)x^{2M+1} + (\beta x -1 ) > 0,$$ 
leaving only $1 < x < \beta$.
\end{enumerate}
\end{proof}

\section{Calculation of the partition function}\label{sec:diag_final}
Using the results of Section~\ref{sec:diag}, we can rewrite the expression~\eqref{eq:s_det} for the determinant appearing in the partition function in Equation~\eqref{eq:Z_primitive} as 
\begin{equation}
\begin{split}
|S| &= \prod_k |1+te^{ik}|^{2M} |A_M(k)| 
\\ &= \prod_k |1+te^{ik}|^{2M} \prod_{z \in R(M,k,t)}
\begin{vmatrix}
a(k) & z^{-M-1}(b(k)-t z) \\
z^{M+1}(t z^{-1} - b(k) & -a(k)
\end{vmatrix}.
\end{split}
\label{eq:s_final}
\end{equation}
where $R(M,k,t)$ is a subset of the roots of Equation~\eqref{eq:eigen_poly} in Section~\ref{sec:diag}, consisting of one of each pair $z, 1/z$ apart from $\pm 1$.
As shown in Lemma~\ref{lem:roots}, $z$ is not in all cases on the unit circle. A real $z$ ($\ne \pm 1$) will be present for $M$ sufficiently large exactly when $t/b(k) >1$, i.e.
\begin{equation}
\frac{t|1+te^{ik}|^2}{1-t^2} > 1.
\end{equation}
The left hand side of this expression is at its largest for $k=0$, where the inequality simplifies to
\begin{equation}
t^2 + 2t -1 > 0
\end{equation}
or rather (since $t > 0$)
\begin{equation}
t > \sqrt{2} -1 = t_c
\end{equation}
so such roots appear for $t > t_c$ (that is, in the ferromagnetic phase) for sufficiently small $k$. 

The fact that $z$ are not known explicitly would seem to limit the usefulness of this representation.  The situation is not as bad as it might seem, as we shall now see by calculating the partition function.  As before, we will omit the dependence of various quantities on $k$ when it is unimportant.

To evaluate the determinant of $A_M$, I write
\begin{equation}
\begin{split}
\log \det A_M = \sum_{z \in R(M,k,t)} \log \begin{vmatrix}
a & z^{-M-1} (b-tz) \\
z^{M+1} (t z^{-1} - b) & -a
\end{vmatrix} \\
= \sum_{z \in R(M,k,t)} \log \left(
-a^2+b^2+t^2 - bt [z+z^{-1}]
\right)
 =:  \sum_{z \in R(M,k,t)} \log f(z).
\end{split}
\end{equation}
We can relate the sum in this equation to a contour integral as follows.  Let $P_M = z^{2M+2} - \frac{t}{b} z^{2M+1} +  \frac{t}{b} z - 1$ be the polynomial in Equation~\eqref{eq:eigen_poly}.  Then for the parameter values where $P_M$ has no repeated roots, the function $z \mapsto P_M'(z) / P_M(z)$ is meromorphic, its poles are the zeros of $P_M$, and they are all simple poles with residue one, so that for any suitably analytic function $F(z)$ satisfying $F(z) = F(1/z)$ we have
\begin{equation}
\frac1{2 \pi i} \oint_C \frac {P'_M (z) F(z) }{P_M (z) } dz = F(1) + F(-1) + 2 \sum_{z \in R(M,k,t)} F(z)
\label{eq:sum_as_int}
\end{equation}
for any contour $C$ surrounding all the zeros of $P_M$.

In the case at hand we would like apply this expression to $F(z) = \log f(z)$;
this indeed satisfies $F(z) = F(1/z)$.  This function has branch points at $0$ and $\infty$ (where $f(0) = f(\infty) = \infty$) and at the zeros of $f$, which are 
\begin{equation}
z_\pm := \frac{
b^2 + t^2 - a^2 \pm \sqrt{(b^2 + t^2 - a^2)^2 - 4 b^2t^2}
}
{2bt}.
\label{eq:z_pm}
\end{equation}
From this expression it is clear that $z_+ > 1$ and $z_+ > t/b$.  If we choose to define $F(z)$ using the principal branch of the logarithm, the branch cuts of $F$ and the roots of $P_M$ are arranged as shown in Figure~\ref{fig:poles_branches}.
\begin{figure}
\centering
\tikzset{ zero/.style={draw,shape=circle, fill=black, inner sep = 1pt},
cross/.style={path picture={ 
  \draw
(path picture bounding box.south east) -- (path picture bounding box.north west) (path picture bounding box.south west) -- (path picture bounding box.north east);}},
branch_point/.style={draw,solid,shape=circle,cross,fill=white, inner sep = 3 pt} 
}

\begin{tikzpicture}[scale=1.7,radius=0.18,decoration={markings,mark=at position 0.5 with {\arrow [scale=2]{>}; } }]
\path[draw=black,dashed] (0,0) node [branch_point,label=$0$] {} -- (0.346412704171355,0) node [branch_point,label=$z_-$] (zm) {} ;
\draw (zm) +(0,0.18) arc[start angle = 90, delta angle = -180] decorate { -- (0,-0.18) }  arc[start angle = 270, delta angle = -180] (zm) +(0,0.18) -- (0,0.18) (zm) + (0,-0.18) -- (0,-0.18);
\path[draw=black,dashed] (2.88673015729049,0) node [branch_point,label=$z_+$] (zp) {} -- (5,0) node [label=right:\text{(to $\infty$) } ] {} ;
\draw (zp) + (-0.18,0) arc[start angle = 180, delta angle = -90] -- (4,0.18) decorate { arc[start angle = 0, delta angle = 15,radius = 4]};
\draw (4,0.18) arc[start angle = 0, delta angle = 15,radius = 4];
\draw (zp) + (-0.18,0) arc[start angle = 180, delta angle = 90] -- (4,-0.18) arc [start angle = 0, delta angle = -15, radius=4] decorate{arc [start angle = -15, delta angle = 15, radius = 4]};
\draw (-1.00000000000000,0.000000000000000) node [zero] {};
\draw (0.673828460681826,0.000000000000000) node [zero] {};
\draw (1.00000000000000,0.000000000000000) node [zero] {};
\draw (1.48405723169979,0.000000000000000) node [zero] {};
\draw (-0.835321942225124,-0.549761087052591) node [zero] {};
\draw (-0.835321942225124,0.549761087052591) node [zero] {};
\draw (-0.393604905748351,-0.919279706167188) node [zero] {};
\draw (-0.393604905748351,0.919279706167188) node [zero] {};
\draw (0.184747619302029,-0.982785997642535) node [zero] {};
\draw (0.184747619302029,0.982785997642535) node [zero] {};
\draw (0.715236382480640,-0.698882620456403) node [zero] {};
\draw (0.715236382480640,0.698882620456403) node [zero] {};

\end{tikzpicture}
\caption{Contour of integration and points of non-analyticity of the integrand in Equation~\eqref{eq:sum_as_int}.  Zeros of $P_M$ are indicated by \tkNode{zero}, branch points of $F(z)$ by {\tkNode{branch_point}} and branch cuts by dashed lines.}  \label{fig:poles_branches}
\end{figure}
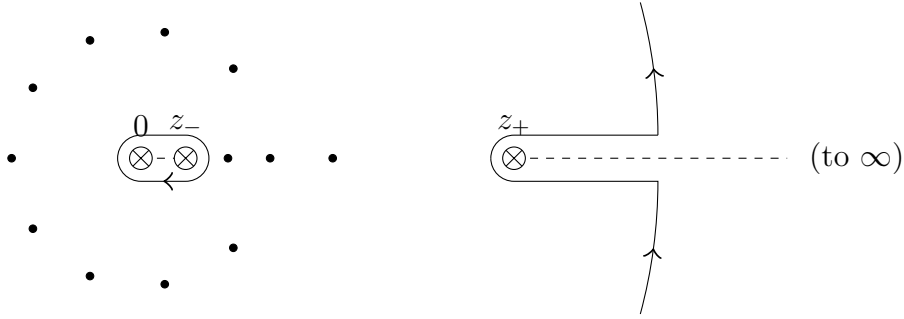
$C$ will be the (disconnected) contour also shown there.  The first component of $C$ consists of two parts: a circle of of radius $R > z_+$ interrupted at the branch cut of $F(z)$, which I denote by $C_R$, and a trajectory $B_R$ which moves leftwards from $R-i0$ around $z_+$ to $R+i0$.  Since the branch cut is that of the logarithm, $F(z)$ changes by $2\pi i$ across the branch cut, and thus
\begin{equation}
\int_{B_R} \frac {P'_M (z) F(z) }{P_M (z) } dz = - 2 \pi i \int_{z_+}^R  \frac {P'_M (z) }{P_M (z) } dz = -2 \pi i \left( \log \frac{ P_M(R)}{P_M(z_+)} \right).
\end{equation}

To evaluate the integral along $C_R$, we note that 
\begin{equation}
\frac{P'_M (z) }{P_M (z) } = \frac{2M+2}{z} + O\left( \frac1{z^2} \right)
\end{equation}
and that
\begin{equation}
F(z) = \Log z + \log bt +  O\left( \frac1{z} \right),
\end{equation}
and therefore
\begin{equation}
\begin{split}
\int_{C_R} \frac {P'_M (z) F(z) }{P_M (z) } dz = (2M+2) \int_{C_R} \left[ 
\frac{\Log z}{z} +\frac{\log bt}{z} +O\left( \frac1{z^2}\right)
\right]
\\=(2M+2) \int_{C_R} \left[ 
\frac{\log R}{z} + i \frac{\arg z}{z}+\frac{\log bt}{z} +O\left( \frac1{z^2}\right)
\right]. 
\end{split} \label{eq:CR_expansion}
\end{equation}
Noting that
\begin{equation}
\int_{C_R} \frac{\arg z}{z} dz = \frac1{R} \int_0^{2\pi} \theta e^{i \theta} d \theta = \frac{2 \pi i}{R}
\end{equation}
and
\begin{equation}
\log P_M(R) = (2M+2) \log R + O\left( \frac1R \right),
\end{equation}
we can combine the preceding equations to obtain
\begin{equation}
\frac1{2 \pi i} \oint_{B_R+C_R} \frac {P'_M (z) F(z) }{P_M (z) } dz = (2M+2)  \log bt  + \log P_M(z_+) +O\left( \frac1R \right). \label{eq:contour_BC}
\end{equation}

The integral for the component of the path near the origin, which I will call $D$, is similar to that for $B_R$:
\begin{equation}
\frac1{2\pi i} \oint_{D} \frac {P'_M (z) F(z) }{P_M (z) } dz = \int_0^{z_-} \frac {P'_M (z) }{P_M (z) } dz = \log \left( -P_M(z_-) \right) \label{eq:contour_D}
\end{equation}

Plugging Equations~\eqref{eq:contour_BC} and~\eqref{eq:contour_D} into Equation~\eqref{eq:sum_as_int} and taking the limit $R \to \infty$ gives
\begin{equation}
\left| A_M \right| = \frac{ (bt)^{M+1}\sqrt{-P_M(z_+) P_M(z_-)}}{\sqrt{f(1)f(-1)}} = \frac{ (bt)^{M+1}P_M(z_+) }{z_+^{M+1}\sqrt{f(1)f(-1)}}  ,
\end{equation}
where the last equality is a result of $z_+ = 1/z_-$ and the fact that $$P_M(z) = - z^{2M+2} P_M(1/z). $$  Using $f(z_+)=0$ to obtain an expression for $a^2$ in terms of the other variables allows us to obtain $f(1)f(-1) = b^2t^2(z_+^2-1)^2/z_+^2$, and therefore
\begin{equation}
\left| A_M \right| =  \frac{ (bt)^{M}P_M(z_+) }{z_+^{M}(z_+^2-1)} .
\label{eq:new_det}
\end{equation}

This should be compared with the calculation of the same determinant by McCoy and Wu.  They note \cite[pp. 121, 349]{MW} that the expansion of $|A_M|$ in complimentary minors results in a simple recursion relationship, whose solution can ultimately be expressed as
\begin{equation}
|A_M| = 
\begin{bmatrix}
1 & 0
\end{bmatrix}
\begin{bmatrix}
-a^2 + b^2 & at \\
-at & t^2
\end{bmatrix}^M
\begin{bmatrix}
1 \\ 0
\end{bmatrix}. \label{eq:MW_det}
\end{equation}
The characteristic polynomial of the $2 \times 2$ matrix in this expression is
\begin{equation}
\begin{split}
(-a^2+b^2-\lambda)(t^2-\lambda)+a^2t^2 =\lambda^2 - (-a^2+b^2+t^2)\lambda+b^2t^2
\\= -bt \lambda f\left( \frac{\lambda}{bt}\right), 
\end{split}
\end{equation}
so its eigenvalues are 
\begin{equation}
\lambda_\pm = bt z_\pm.
\end{equation}
The corresponding normalized right eigenvectors are
\begin{equation}
\begin{bmatrix}
v_\pm \\ w_\pm
\end{bmatrix}
=
\frac1{\sqrt{(t^2-\lambda_\pm)^2-a^2t^2}}
\begin{bmatrix}
t^2-\lambda_\pm\\
i at
\end{bmatrix}
\end{equation}
so, noting that the matrix involved is Hermitian, Equation~\eqref{eq:MW_det} gives
\begin{equation}
|A_M| = v_+^2 \lambda_+^M + v_-^2 \lambda_-^M 
= \frac{(t^2-btz_+)^2}{(t^2 - bt z_+ )^2-a^2t^2}b^Mt^Mz_+^M +  \frac{(t^2-btz_-)^2}{(t^2 - bt z_- )^2-a^2t^2}b^Mt^Mz_-^M.
\end{equation}
Noting that $z_- = 1/z_+$, we can rewrite this as
\begin{equation}
|A_M| =b^M t^Mz_+^{-M-2} \left[
\frac{(t^2 - bt z_+)^2  z_+^{2M+2}}{a^2t^2+(t^2 - bt z_+ )^2}
+\frac{(t^2z_+ - bt)^2 }{a^2t^2z_+^2+(t^2 z_+ - bt  )^2}
\right].
\end{equation}
Eliminating $a^2$ as before and writing the above expression over a common denominator indeed reproduces Equation~\eqref{eq:new_det}.  Subsitituting this into Equations~\eqref{eq:s_final} and~\eqref{eq:Z_primitive} then gives
\begin{equation}
\begin{split}
Z^2 = 2^{2MN-2}&(\cosh \beta J)^{4MN-2N}  \\ 
\times\prod_{ k = \frac\pi{N}, \frac{3\pi}{N},\dots} & \left| 1+te^{i k }\right|^{2M} [b( k ) t z_+( k )]^M
\\ & \times \left[ \frac{z_+( k )}{z_+( k )^2-1} \right] \left[1 + \frac{b( k ) z_+( k ) -t}{tz_+( k ) -b( k ) } z_+( k )^{-2M-1}  \right] \label{eq:z2_primitive}.
\end{split}
\end{equation}

\section{Universal finite size corrections}\label{sec:uni}
We now turn to the expansion of the pressure $\log Z$ given in Equation~\eqref{eq:kappa_exp}.  To do so we successively examine the factors in the product on the right-hand side of Equation~\eqref{eq:z2_primitive}.

As for the first two factors, noting that $\log b( k )$ is analytic, the Euler-Maclaurin formula gives
\begin{equation}
\begin{split}
 \log  \prod_{ k = \frac\pi{N}, \frac{3\pi}{N},\dots} & \left| 1+te^{i k }\right|^{2} [b( k ) t z_+( k )]
\\ = &\frac{N}{2\pi} \int_0^{2\pi} d k  \left\{ 2 \log \left| 1+te^{i k }\right| +\log [b( k ) t z_+( k )]  \right\}
\\& - \frac{\pi}{N} \frac1{12} \left( \frac{z_+'(0^+) - z_+'(2 \pi^-)}{z_+(0)} \right) + O \left( \frac1{N^2} \right) \label{eq:EM}
\end{split}
\end{equation}
Using the explicit expressions for $z_+$, $a$, and $b$, and setting $t$ to the critical value of $\sqrt2 - 1$, we find (after some tedious algebra) that
\begin{equation}
z_+( k ) = \sqrt{\cos^2( k ) - 4 \cos( k ) + 3} - \cos( k ) + 2, \label{eq:zpc}
\end{equation}
from which it is easy to see that 
 $z_+(0) = 1$ and $z_+'(0^+) = - z_+'(2\pi^-) = 1 $, so that the Equation~\eqref{eq:EM} simplifies to
\begin{equation}
\begin{split}
& \log \prod_{ k = \frac\pi{N}, \frac{3\pi}{N},\dots} \left| 1+te^{i k }\right|^{2} [b( k ) t z_+( k )]
\\& = \frac{N}{2\pi} \int_0^{2\pi} d k  \left\{ 2 \log \left| 1+te^{i k }\right| +\log [b( k ) t z_+( k )]  \right\}
- \frac{\pi}{6} \frac1{N} + O \left( \frac1{N^2} \right).
\end{split}
\end{equation}

The next factor is independent of $M$, and hence of no immediate interest (it gives a contribution to the surface energy of the system, however, as noted by McCoy and Wu \cite[pp.~122-3]{MW}).  

To evaluate the last factor, first note that since $|\log(x)-\log(y)| \le |x-y|$ for $x,y > 1$, 
\begin{equation}
\begin{split}
 \left| \log \left(1 + \frac{b( k ) z_+( k ) -t}{tz_+( k ) -b( k ) } z_+( k )^{-2M-1}  \right)
- \log \left(1 +  z_+( k )^{-2M} \right) \right| 
\\ \le 
\left|
1 - \frac{b( k ) z_+( k ) -t}{tz_+( k ) -b( k ) } z_+( k )^{-1}
\right| z_+( k )^{2M}
\end{split}
\end{equation}
which (since $z_+( k ) < 1$ for $0 <  k  < 2 \pi$) vanishes faster than any power of $M$.  As a result we can replace the last factor with the product over $ k $ of $1+z_+( k )^{-2M}$, and noting that Equation~\eqref{eq:zpc} also gives $z_+''(0) =1$ we can expand $z_+$ and rearrange the terms in order to approximate this product by
\begin{equation}
 \prod_{ k = \frac\pi{N}, \frac{3\pi}{N},\dots} \left[
1+z_+( k )^{-2M}
\right]
=
\left(
 \prod_{ k = \frac\pi{N}, \frac{3\pi}{N},\dots} \left[
1+e^{-2M  k }
\right]
\right)^2 + O \left( \frac{M}{N^2} \right).
\end{equation}
As noted in~\cite{FF}, the product on the right hand side can be expressed in terms of Jacobi theta functions as
\begin{equation}
\prod_{r = 0}^{\infty} \left[ 1 + \exp\left(\frac{-2M(2r-1)\pi}{N}\right) \right]^2
= \frac{\theta_3(e^{-2\pi\frac{M}{N}})}{\theta_0(e^{-2\pi\frac{M}{N}})}.
\end{equation}
where $\theta_j (q)$ is an abbreviation for $\theta_j(0,q)$, and
\begin{equation}
\theta_0(q) = q^{-1/12}\left[\tfrac{1}{2} \theta_2(q) \theta_3(q) \theta_4(q) \right]^{1/3}.
\end{equation}
Subsitituting this and Equation~\eqref{eq:EM} into Equation~\eqref{eq:z2_primitive}, taking the logarithm and isolating the term of relevant order gives 
\begin{equation}
\kappa (\zeta) = \frac16 \log \left(
\frac{ \theta_3^2 \left( e^{-2\pi \zeta} \right) }
{2 \theta_2  \left( e^{-2\pi \zeta} \right) \theta_4  \left( e^{-2\pi \zeta} \right) }
\right). \label{eq:kappa_final}
\end{equation}
where $\zeta = M/N$.

To compare this result with the predictions of conformal field theory, we need to examine the leading-order behavior of $\kappa$ as $\zeta\to \infty$ (i.e.\ $M >> N$) and $\zeta \to 0$ (i.e.\ $N >> M$).  For the former limit, where $q := e^{-2\pi \zeta} \to 0$, we use the series representations of the theta functions to write
\begin{gather}
\theta_2(q) = q^{1/4} + O(q^{9/4}) \\
\theta_3(q) = 1 + O(q) \\
\theta_4(q) = 1+ O(q),
\end{gather}
so that
\begin{equation}
\kappa(\zeta) \to \frac16 \log q^{-1/4} = \frac{\pi}{12} \zeta.
\end{equation}

For the limit $\zeta \to 0$, we use Jacobi's imaginary transformation, which in this case takes the form 
\begin{gather}
\theta_2 (e^{- \pi x})= x^{-1/2} \theta_4 (e^{- \pi / x}) \\  
\theta_3 (e^{- \pi x})= x^{-1/2} \theta_3 (e^{- \pi / x}) \\ 
\theta_4 (e^{- \pi x})= x^{-1/2} \theta_2 (e^{- \pi / x}),
\end{gather}
 to rewrite Equation~\eqref{eq:kappa_final} as
\begin{equation}
\kappa(\zeta) =  \frac16 \log \left(
\frac{ \theta_3^2 \left( e^{-\pi/ 2 \zeta} \right) }
{2 \theta_2  \left( e^{-\pi/ 2 \zeta} \right) \theta_4  \left( e^{-\pi / 2 \zeta} \right) } 
\right) = \frac{\pi}{48} \zeta^{-1} + O(\zeta^{-2}),
\end{equation}
giving the desired term.

\bibliographystyle{alpha}
\bibliography{../Ising,palindromes}

\end{document}